\documentclass[11pt]{article}
\usepackage{amsmath, amsthm, amscd, amsfonts, mathrsfs, amssymb, graphicx, enumerate}
\addtolength{\topmargin}{-1.5cm}
\linespread {1.3}
\textwidth 17cm
\textheight 23cm
\addtolength{\hoffset}{-0.3cm}
\oddsidemargin 0cm
\evensidemargin 0cm
\setcounter{page}{1}
\newtheorem{theorem}{Theorem}[section]
\newtheorem{lemma}{Lemma}[section]
\newtheorem{remark}{Remark}[section]

\newtheorem{corollary}{Corollary}[section]

\numberwithin{equation}{section}

\begin{document}
\title{Some new Karamata type inequalities and their applications to some entropies}
\author{Shigeru Furuichi$^1$, Hamid Reza Moradi$^2$, and Akram Zardadi$^3$\\
$^1${\small Department of Information Science, College of Humanities and Sciences, Nihon University,}\\
{\small 3-25-40, Sakurajyousui, Setagaya-ku, Tokyo, 156-8550, Japan.}\\
{\small E-mail: furuichi@chs.nihon-u.ac.jp}\\
$^2${\small Department of Mathematics, Payame Noor University (PNU), P.O.Box, 19395-4697, Tehran, Iran.}\\
{\small E-mail: hrmoradi@mshdiau.ac.ir}\\
$^3${\small Department of Mathematics, Payame Noor University (PNU), P.O.Box, 19395-4697, Tehran, Iran.}\\
{\small E-mail: azardadi1990@yahoo.com}
}
\date{}
\maketitle
{\bf Abstract.}
Some new inequalities of Karamata type are established with a convex function in this paper. The methods of our proof allow us to obtain an extended version of the reverse of Jensen inequality given by Pe\v cari\'c and  Mi\'ci\'c. 

Applying the obtained results, we give reverses for information inequality (Shannon inequality) in different types, namely ratio type and difference type, under some conditions. Also, we provide interesting inequalities for von Neumann entropy and quantum Tsallis entropy which is a parametric extension of von Neumann entropy. The inequality for von Neumann entropy recovers the non-negativity and gives a refinement for the weaker version of Fannes's inequality for only special cases. Finally, we estimate bounds for the Tsallis relative operator entropy.
\vspace{2mm}

{\bf Keywords : } Convex functions, Jensen's inequality, Karamata's inequality, information inequality.
\vspace{2mm}

{\bf 2010 Mathematics Subject Classification : } Primary 47A63, 94A17, Secondary 46L05, 47A60,15A39.
\vspace{2mm}

\section{Introduction}
We start by recalling some well-known notions which will be used in the sequel.
Let $\mathbf{x}=\left( {{x}_{1}},\ldots ,{{x}_{n}} \right)$ and $\mathbf{y}=\left( {{y}_{1}},\ldots ,{{y}_{n}} \right)$ be two (finite) sequences of real numbers, and let ${{x}_{ 1 }}\ge \ldots \ge {{x}_{ n }}$, ${{y}_{ 1 }}\ge \ldots \ge {{y}_{ n }}$ denote the components of $\mathbf{x}$ and $\mathbf{y}$ in decreasing order,
respectively. The $n$-tuple $\mathbf{y}$ is said to majorize $\mathbf{x}$ (or $\mathbf{x}$ is to be majorized by $\mathbf{y}$) in symbols $\mathbf{x}\prec \mathbf{y}$, if 
\[\sum\limits_{i=1}^{k}{{{x}_{ i }}}\le \sum\limits_{i=1}^{k}{{{y}_{ i }}}\quad\text{ holds for }k=1,\ldots ,n-1,\text{ and }\quad\sum\limits_{i=1}^{n}{{{x}_{i}}}=\sum\limits_{i=1}^{n}{{{y}_{i}}}.\]
In what follows,  $\mathscr{H}$ means a complex Hilbert space with inner product $\left\langle \cdot,\cdot \right\rangle $  and $\mathbb{B}\left( \mathscr{H} \right)$ is the algebra of all bounded linear operators on $\mathscr{H}$. We denote by $\mathbb{B}{{\left( \mathscr{H} \right)}_{h}}$ the real subspace of all self-adjoint operators on $\mathscr{H}$ and by $\mathbb{B}{{\left( \mathscr{H} \right)}_{+}}$ the set of all positive invertible operators in $\mathbb{B}{{\left( \mathscr{H} \right)}_{h}}$. 
Here the symbol ${{\mathbf{1}}_{\mathscr{H}}}$ denotes the identity operator on $\mathscr{H}$. We write $A\ge 0$  to mean that the operator $A$ is positive. If $A-B\ge 0$,
then we write $A\ge B$. A linear map $\Phi :\mathbb{B}\left( \mathscr{H} \right)\to \mathbb{B}\left( \mathscr{K} \right)$ is positive if $\Phi \left( A \right)\ge 0$  whenever $A\ge 0$. It is said to be unital if $\Phi \left( {{\mathbf{1}}_{\mathscr{H}}} \right)={{\mathbf{1}}_{\mathscr{K}}}$. For a real-valued function $f$ of a real variable and a self-adjoint operator $A\in \mathbb{B}\left( \mathscr{H} \right)_h$, the value $f\left( A \right)$  is understood by means of the functional
calculus for self-adjoint operators. We use the symbol $J$ as an interval on $\mathbb{R}$ in the sequel.

We start with an elegant result as part of the motivation for this paper.
\medskip

\begin{theorem} {\bf (Karamata's inequality \cite{3})} Let $f:J\to \mathbb{R}$ be a convex function and $\mathbf{x}=\left( {{x}_{1}},\ldots ,{{x}_{n}} \right)$,  $\mathbf{y}=\left( {{y}_{1}},\ldots ,{{y}_{n}} \right)$ be two $n$-tuples such that ${{x}_{i}},{{y}_{i}}\in J$ ($i=1,\ldots ,n$). Then
\[\mathbf{x}\prec \mathbf{y}\text{ }\Leftrightarrow \text{ }\sum\limits_{i=1}^{n}{f\left( {{x}_{i}} \right)}\le \sum\limits_{i=1}^{n}{f\left( {{y}_{i}} \right)}.\]
\end{theorem}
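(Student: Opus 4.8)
The plan is to establish both implications, with the forward direction ($\Rightarrow$) being the substantive analytic content. After relabelling I may assume both tuples are arranged in decreasing order, $x_1\ge\cdots\ge x_n$ and $y_1\ge\cdots\ge y_n$, since neither the majorization relation nor the two sums $\sum f(x_i),\sum f(y_i)$ depend on the labelling. The central device is Abel's summation by parts combined with the monotonicity of chord slopes of a convex function.

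First I would write the difference as $\sum_{i=1}^n f(y_i)-\sum_{i=1}^n f(x_i)=\sum_{i=1}^n c_i\,(y_i-x_i)$, where $c_i$ denotes the slope of the secant joining $(x_i,f(x_i))$ and $(y_i,f(y_i))$, with a one-sided derivative substituted when $x_i=y_i$. Setting $D_k=\sum_{i=1}^k (y_i-x_i)$, the hypothesis $\mathbf{x}\prec\mathbf{y}$ says precisely that $D_k\ge 0$ for $k=1,\ldots,n-1$ and $D_n=0$. Since $y_i-x_i=D_i-D_{i-1}$ (with $D_0=0$), Abel summation rewrites the difference as $\sum_{i=1}^{n-1}(c_i-c_{i+1})\,D_i$, the two boundary contributions disappearing because $D_0=D_n=0$.

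The main obstacle is the claim $c_1\ge c_2\ge\cdots\ge c_n$, and this is exactly where convexity is indispensable. I would invoke the three-slope (three-chord) lemma for convex functions: the secant slope $s(p,q)=\dfrac{f(q)-f(p)}{q-p}$ is non-decreasing in each of its two arguments separately. Because $x_{i+1}\le x_i$ and $y_{i+1}\le y_i$, applying this monotonicity one slot at a time gives $c_{i+1}=s(x_{i+1},y_{i+1})\le s(x_i,y_{i+1})\le s(x_i,y_i)=c_i$. Hence $c_i-c_{i+1}\ge 0$; together with $D_i\ge 0$ each summand is non-negative, so $\sum f(y_i)\ge\sum f(x_i)$, which is the asserted inequality.

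For the reverse implication ($\Leftarrow$), which is meaningful only when read as quantified over all convex $f:J\to\mathbb{R}$, I would test the inequality against a small, well-chosen family. Taking $f(t)=t$ and then $f(t)=-t$ (both convex) forces $\sum x_i=\sum y_i$, the equality of totals. Then, for each fixed $k$, applying the inequality to the convex function $f(t)=(t-y_k)_+=\max\{t-y_k,0\}$ and using $(t-c)_+\ge t-c$ together with $(y_i-y_k)_+=y_i-y_k$ for $i\le k$ and $=0$ for $i>k$, I obtain $\sum_{i=1}^k x_i - k\,y_k\le \sum_{i=1}^k y_i - k\,y_k$, that is $\sum_{i=1}^k x_i\le\sum_{i=1}^k y_i$. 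Ranging over $k=1,\ldots,n-1$ yields exactly $\mathbf{x}\prec\mathbf{y}$, completing the equivalence.
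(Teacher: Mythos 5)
The paper itself offers no proof of this theorem: it is quoted as Karamata's classical result with a citation to \cite{3} and serves only as motivation, so there is no in-paper argument to compare yours against. Judged on its own, your proof is correct and is the classical one: the decomposition $\sum_i f(y_i)-\sum_i f(x_i)=\sum_i c_i(y_i-x_i)$ into secant slopes, Abel summation against the partial sums $D_k$ (whose non-negativity is exactly the majorization hypothesis, with the boundary terms killed by $D_0=D_n=0$), and the three-chord lemma giving $c_1\ge c_2\ge\cdots\ge c_n$; the converse via the test functions $t$, $-t$ and the hinge functions $(t-y_k)_+$ is likewise the standard route, and you are right to point out that the converse is only meaningful when quantified over all convex $f$. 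One technical point deserves a line of care: after extending the slope $s(p,q)$ to the diagonal by a one-sided derivative, the middle term $s(x_i,y_{i+1})$ in your chain can itself be degenerate (when $x_i=y_{i+1}$), so you should fix a consistent convention (say, the right derivative, or any subgradient value) and record the inequalities $s(q,p)\le f'(p^-)\le f'(p^+)\le s(p,r)$ for $q<p<r$, after which the monotonicity chain $c_{i+1}\le s(x_i,y_{i+1})\le c_i$ goes through verbatim. That is a small patch, not a gap, and your argument proves strictly more than the paper needs, since the paper only ever invokes the stated theorem (and its weighted extension, Theorem \ref{theorem_b}) as a black box.
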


The following extension of majorization theorem is due to Fuchs \cite{4}:

\medskip

\begin{theorem}{\bf (\cite{4})}\label{theorem_b}
 Let $f:J\to \mathbb{R}$ be a convex function, $\mathbf{x}=\left( {{x}_{1}},\ldots ,{{x}_{n}} \right)$, $\mathbf{y}=\left( {{y}_{1}},\ldots ,{{y}_{n}} \right)$ two decreasing $n$-tuples such that ${{x}_{i}},{{y}_{i}}\in J$ ($i=1,\ldots ,n$), and $\mathbf{p}=\left( {{p}_{1}},\ldots ,{{p}_{n}} \right)$ be a real $n$-tuple such that 
\begin{equation}\label{8c}
\sum\limits_{i=1}^{k}{{{p}_{i}}{{x}_{i}}}\le \sum\limits_{i=1}^{k}{{{p}_{i}}{{y}_{i}}}\quad\text{ for }k=1,\ldots ,n-1,\text{ and }\quad\sum\limits_{i=1}^{n}{{{p}_{i}}{{x}_{i}}}=\sum\limits_{i=1}^{n}{{{p}_{i}}{{y}_{i}}}.
\end{equation}
Then
\begin{equation}\label{8}
\sum\limits_{i=1}^{n}{{{p}_{i}}f\left( {{x}_{i}} \right)}\le \sum\limits_{i=1}^{n}{{{p}_{i}}f\left( {{y}_{i}} \right)}.
\end{equation}
\end{theorem}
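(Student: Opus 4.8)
The plan is to prove the weighted inequality \eqref{8} by summation by parts, reducing everything to the monotonicity of secant slopes of a convex function. Write
\[
\Delta := \sum_{i=1}^{n} p_i f(y_i) - \sum_{i=1}^{n} p_i f(x_i) = \sum_{i=1}^{n} p_i\bigl(f(y_i)-f(x_i)\bigr);
\]
the goal is to show $\Delta \ge 0$. First I would record the partial sums $A_k := \sum_{i=1}^{k} p_i(y_i - x_i)$ for $k = 1,\ldots,n$. By hypothesis \eqref{8c} these satisfy $A_k \ge 0$ for $k = 1,\ldots,n-1$ and $A_n = 0$; this is precisely the information the argument will exploit.

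Next, for each $i$ I would write the increment $f(y_i) - f(x_i)$ as a slope times $(y_i - x_i)$. Setting
\[
t_i := \frac{f(y_i) - f(x_i)}{y_i - x_i}
\]
whenever $x_i \ne y_i$ (and, in the degenerate case $x_i = y_i$, letting $t_i$ be any element of the subdifferential of $f$ at that point, so that $f(y_i)-f(x_i) = t_i(y_i - x_i)$ holds trivially), one obtains $\Delta = \sum_{i=1}^{n} t_i\, p_i(y_i - x_i)$. Abel's summation formula then gives
\[
\Delta = A_n t_n + \sum_{i=1}^{n-1} A_i (t_i - t_{i+1}) = \sum_{i=1}^{n-1} A_i (t_i - t_{i+1}),
\]
using $A_n = 0$.

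At this point the conclusion follows once we know $t_1 \ge t_2 \ge \cdots \ge t_n$, since then every summand is the product of the non-negative factor $A_i$ with the non-negative factor $t_i - t_{i+1}$. The monotonicity of the $t_i$ is exactly where convexity enters: the divided difference $[u,v]f = \frac{f(v)-f(u)}{v-u}$ of a convex function is symmetric in $u,v$ and non-decreasing in each of its two arguments. Since $\mathbf{x}$ and $\mathbf{y}$ are both decreasing, we have $x_i \ge x_{i+1}$ and $y_i \ge y_{i+1}$, so both endpoints of the pair $(x_i, y_i)$ dominate those of $(x_{i+1}, y_{i+1})$; applying the monotonicity of the divided difference one variable at a time yields $t_i \ge t_{i+1}$.

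I expect the main obstacle to be the careful treatment of the slopes, and in particular the degenerate cases in which some $x_i = y_i$ or in which an intermediate pair shares a coordinate. There the slope $t_i$ is not given by a difference quotient and must be replaced by a one-sided derivative of $f$; one must check that the sandwich inequalities $f'_+(u) \le [u,v]f \le f'_-(v)$ for $u < v$, together with the monotonicity of $f'_-$ and $f'_+$, still deliver the chain $t_i \ge t_{i+1}$. Once this monotonicity is secured uniformly, the summation-by-parts identity closes the proof. Note that no sign restriction on $\mathbf{p}$ is used anywhere, which is exactly why the hypotheses in \eqref{8c} are imposed on the weighted partial sums rather than on the individual $p_i$.
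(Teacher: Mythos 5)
The paper itself contains no proof of this theorem: it is stated as background material and attributed to Fuchs via the citation \cite{4}, so there is no in-text argument to compare yours against. Judged on its own merits, your proof is correct, and it is in fact the classical proof of Fuchs's result: introduce slopes $t_i$ with $f(y_i)-f(x_i)=t_i(y_i-x_i)$, sum by parts against the partial sums $A_k=\sum_{i=1}^{k}p_i(y_i-x_i)$, and use \eqref{8c} together with the chain $t_1\ge t_2\ge\cdots\ge t_n$ (a consequence of convexity and the decreasingness of both tuples) to get $\Delta=\sum_{i=1}^{n-1}A_i(t_i-t_{i+1})\ge 0$; your closing remark that no sign condition on $\mathbf{p}$ is needed is also exactly right. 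One small repair is needed in your handling of degenerate pairs: if $x_i=y_i=c$ with $c$ an endpoint of $J$, the subdifferential of $f$ at $c$ may be empty (a finite convex function can have an infinite one-sided derivative at an endpoint of its interval), so ``any element of the subdifferential'' is not always available. This costs nothing, however: since $\mathbf{x}$ and $\mathbf{y}$ are decreasing, such pairs occupy an initial block (right endpoint) or a final block (left endpoint) of indices, they contribute $0$ to $\Delta$ whatever value $t_i$ is assigned, and one may simply give them the value of the adjacent nondegenerate slope so that the chain $t_1\ge\cdots\ge t_n$ survives; at interior points the subdifferential $\left[f'_-(c),f'_+(c)\right]$ is nonempty and your sandwich inequalities do yield $t_{i-1}\ge t_i\ge t_{i+1}$. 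Likewise, in the case where only the intermediate pair degenerates, i.e. $x_{i+1}=y_i=c$, the hypotheses force $y_{i+1}<c<x_i$, so the three-chord lemma gives $t_{i+1}\le t_i$ directly without any one-sided derivatives. With these points made explicit, the argument closes exactly as you describe.
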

The conditions (\ref{8c}) are sometimes called $p$-majorization \cite{Cheng1977}. See also Section A in Chapter 14 in \cite{MO1979}.

As an application of Theorem \ref{theorem_b}, we have the following inequality related to $m$-th moment:
$$
\sum_{i=1}^n p_i(x_i-\overline{x})^m \leq \sum_{i=1}^np_i(y_i-\overline{y})^m,
$$
where $\overline{x} \equiv \sum_{i=1}^n p_i x_i$ and $\overline{y} \equiv \sum_{i=1}^n p_i y_i$. Indeed, this case satisfies the conditions (\ref{8c}) and the function $f(x) = (x-\overline{x})^m$ is convex for $m \geq 1$.

In this paper, we obtain a complementary inequality to Karamata's inequality.  It is an extension of the inequality due to Pe\v cari\'c and Mi\'ci\'c. Let $f$ be a convex function on the interval $\left[ m,M \right]$, $\left( {{A}_{1}},\ldots ,{{A}_{n}} \right)$, $\left( {{B}_{1}},\ldots ,{{B}_{n}} \right)$ be two $n$-tuples of self-adjoint operators with $m{{\mathbf{1}}_{\mathscr{H}}}\le {{A}_{i}},{{B}_{i}}\le M{{\mathbf{1}}_{\mathscr{H}}}$ ($i=1,\ldots ,n$) and ${{p}_{1}},\ldots ,{{p}_{n}}$ be positive scalars with $\sum\nolimits_{i=1}^{n}{{{p}_{i}}}=1$. We prove, among other inequalities, if $\sum\nolimits_{i=1}^{n}{{{p}_{i}}{{A}_{i}}}=\sum\nolimits_{i=1}^{n}{{{p}_{i}}{{B}_{i}}}$, then for a given $\alpha \ge 0$
\[\sum\limits_{i=1}^{n}{{{p}_{i}}f\left( {{A}_{i}} \right)}\le \beta +\alpha \sum\limits_{i=1}^{n}{{{p}_{i}}f\left( {{B}_{i}} \right)},\]
where $\beta =\underset{t\in \left[ m,M \right]}{\mathop{\max }}\,\left\{ {{a}_{f}}t+{{b}_{f}}-\alpha f\left( t \right) \right\}$ with the notations ${{a}_{f}}=\frac{f\left( M \right)-f\left( m \right)}{M-m}$ and ${{b}_{f}}=\frac{Mf\left( m \right)-mf\left( M \right)}{M-m}$ which will be used throughout this paper.
Some applications and remarks are given as well.
\section{Main Results}

In order to prove our main result we need the following lemma.
\begin{lemma}\label{14}
Let $f:J\to \mathbb{R}$ be a convex function, ${{A}_{i}}\in \mathbb{B}{{\left( \mathscr{H} \right)}_{h}}$ ($i=1,\ldots ,n$)  with the spectra  in $J$, and let ${{\Phi }_{i}}:\mathbb{B}\left( \mathscr{H} \right)\to \mathbb{B}\left( \mathscr{K} \right)$ ($i=1,\ldots ,n$) be  positive linear mappings such that $\sum\nolimits_{i=1}^{n}{{{\Phi }_{i}}\left( {{\mathbf{1}}_{\mathscr{H}}} \right)}={{\mathbf{1}}_{\mathscr{K}}}$. Then for any  $x\in \mathscr{K}$ with $\left\| x \right\|=1$,
\begin{equation}\label{04}
f\left( \left\langle \sum\limits_{i=1}^{n}{{{\Phi }_{i}}\left( {{A}_{i}} \right)}\;x,x \right\rangle  \right)\le \left\langle \sum\limits_{i=1}^{n}{{{\Phi }_{i}}\left( f\left( {{A}_{i}} \right) \right)}\;x,x \right\rangle.
\end{equation}	
\end{lemma}
\begin{proof}
It is well-known that if $f$ is a convex function on an interval $J$, then for each point $\left( s,f\left( s \right) \right)$, there exists a real number ${{C}_{s}}$ such that
\begin{equation}\label{01}
f\left( s \right)+{{C}_{s}}\left( t-s \right)\le f\left( t \right),\quad\text{ for all }t\in J.
\end{equation}
(Of course, if $f$ is differentiable at $s$, then ${{C}_{s}}=f'\left( s \right)$.)\\
Fix $s\in J$. Since $J$ contains the spectra of the $A_i$ for $ i=1,\ldots ,n$, we may replace $t$ in the above inequality by $A_i$, via a functional calculus to get
\[f\left( s \right){{\mathbf{1}}_{\mathscr{H}}}+{{C}_{s}}{{A}_{i}}-{{C}_{s}}s{{\mathbf{1}}_{\mathscr{H}}}\le f\left( {{A}_{i}} \right).\]
Applying the positive linear mappings ${{\Phi }_{i}}$ and summing on $i$ from $1$ to $n$, this implies
\begin{equation}\label{02}
f\left( s \right){{\mathbf{1}}_{\mathscr{K}}}+{{C}_{s}}\sum\limits_{i=1}^{n}{{{\Phi }_{i}}\left( {{A}_{i}} \right)}-{{C}_{s}}s{{\mathbf{1}}_{\mathscr{K}}}\le \sum\limits_{i=1}^{n}{{{\Phi }_{i}}\left( f\left( {{A}_{i}} \right) \right)}.
\end{equation}
The inequality \eqref{02} easily implies, for any $x\in \mathscr{K}$ with $\left\| x \right\|=1$,
\begin{equation}\label{03}
f\left( s \right)+{{C}_{s}}\left\langle \sum\limits_{i=1}^{n}{{{\Phi }_{i}}\left( {{A}_{i}} \right)}\;x,x \right\rangle -{{C}_{s}}\;s\le \left\langle \sum\limits_{i=1}^{n}{{{\Phi }_{i}}\left( f\left( {{A}_{i}} \right) \right)}\;x,x \right\rangle.
\end{equation}
On the other hand, since $\sum\nolimits_{i=1}^{n}{{{\Phi }_{i}}\left( {{\mathbf{1}}_{\mathscr{H}}} \right)}={{\mathbf{1}}_{\mathscr{K}}}$ we have $\left\langle \sum\nolimits_{i=1}^{n}{{{\Phi }_{i}}\left( {{A}_{i}} \right)}\;x,x \right\rangle \in J$ where $x\in \mathscr{K}$ with $\left\| x \right\|=1$. Therefore, we may replace $s$ by $\left\langle \sum\nolimits_{i=1}^{n}{{{\Phi }_{i}}\left( {{A}_{i}} \right)}\;x,x \right\rangle $  in \eqref{03}. This yields \eqref{04}.
\end{proof}
\begin{remark}
It is worth to remark that the inequality \eqref{04}  is the extension of \cite[Theorem 1]{dragomir}.
\end{remark}

We now state our first result.
\begin{theorem}\label{15}
Let $f:\left[ m,M \right]\to \mathbb{R}$ be a convex function, ${{A}_{i}},{{B}_{i}}\in \mathbb{B}{{\left( \mathscr{H} \right)}_{h}}$ ($i=1,\ldots ,n$)  with the spectra in $\left[ m,M \right]$,  and let ${{\Phi }_{i}}:\mathbb{B}\left( \mathscr{H} \right)\to \mathbb{B}\left( \mathscr{K} \right)$ ($i=1,\ldots ,n$) be  positive linear mappings such that $\sum\nolimits_{i=1}^{n}{{{\Phi }_{i}}\left( {{\mathbf{1}}_{\mathscr{H}}} \right)}={{\mathbf{1}}_{\mathscr{K}}}$. If $\,\,\sum\nolimits_{i=1}^{n}{{{\Phi }_{i}}\left( {{A}_{i}} \right)}=\sum\nolimits_{i=1}^{n}{{{\Phi }_{i}}\left( {{B}_{i}} \right)}$, then for a given $\alpha \ge 0$
\[\sum\limits_{i=1}^{n}{{{\Phi }_{i}}\left( f\left( {{A}_{i}} \right) \right)}\le \beta \mathbf{1}_\mathscr{K}+\alpha \sum\limits_{i=1}^{n}{{{\Phi }_{i}}\left( f\left( {{B}_{i}} \right) \right)},\]
where
\begin{equation}\label{1}
\beta =\underset{t\in \left[ m,M \right]}{\mathop{\max }}\,\left\{ {{a}_{f}}t+{{b}_{f}}-\alpha f\left( t \right) \right\}.
\end{equation}
\end{theorem}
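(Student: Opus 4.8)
The plan is to prove the inequality directly from the \emph{chord} bound for convex functions, rather than from the support-line (Jensen) bound of Lemma \ref{14}: the latter gives a lower bound for $\langle\sum_i\Phi_i(f(A_i))x,x\rangle$ and therefore points in the wrong direction for a reverse estimate. The one scalar fact I would use is that a convex $f$ lies below the secant joining $(m,f(m))$ and $(M,f(M))$ on $[m,M]$, namely
\[
f(t)\le a_f\,t+b_f,\qquad t\in[m,M],
\]
where one checks immediately that the right-hand side equals $f(m)$ at $t=m$ and $f(M)$ at $t=M$ with the stated $a_f,b_f$. This is where the reversal of Karamata's direction originates.

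First I would promote this pointwise inequality through the functional calculus: since each $A_i$ has spectrum in $[m,M]$, we get $f(A_i)\le a_f A_i+b_f\mathbf{1}_{\mathscr{H}}$. Applying the positive (hence order-preserving) maps $\Phi_i$, summing over $i$, and using the normalization $\sum_{i=1}^n\Phi_i(\mathbf{1}_{\mathscr{H}})=\mathbf{1}_{\mathscr{K}}$ yields
\[
\sum_{i=1}^{n}\Phi_i\!\left(f(A_i)\right)\le a_f\sum_{i=1}^{n}\Phi_i(A_i)+b_f\,\mathbf{1}_{\mathscr{K}}.
\]
At this point the hypothesis $\sum_i\Phi_i(A_i)=\sum_i\Phi_i(B_i)$ is invoked to transfer the right-hand side onto the $B$-side, replacing $\sum_i\Phi_i(A_i)$ by $\sum_i\Phi_i(B_i)$.

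The decisive second step is to absorb $a_f\sum_i\Phi_i(B_i)+b_f\mathbf{1}_{\mathscr{K}}$ into the target form, and here the definition \eqref{1} of $\beta$ as a maximum is exactly what is needed: for every $t\in[m,M]$ one has $a_f t+b_f-\alpha f(t)\le\beta$, i.e.\ $a_f t+b_f\le\beta+\alpha f(t)$. Promoting this single scalar inequality through the functional calculus applied to each $B_i$ (spectrum in $[m,M]$), then applying $\Phi_i$, summing, and once more using $\sum_i\Phi_i(\mathbf{1}_{\mathscr{H}})=\mathbf{1}_{\mathscr{K}}$, gives
\[
a_f\sum_{i=1}^{n}\Phi_i(B_i)+b_f\,\mathbf{1}_{\mathscr{K}}\le \beta\,\mathbf{1}_{\mathscr{K}}+\alpha\sum_{i=1}^{n}\Phi_i\!\left(f(B_i)\right).
\]
Chaining the three displays yields the assertion.

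I do not anticipate a serious obstacle: every operator inequality is either the scalar chord/maximum bound promoted verbatim through the functional calculus (valid because the relevant spectra sit in $[m,M]$), or an application of the order-preservation of the $\Phi_i$. The only points that demand care are (i) that the chord bound uses precisely the constants $a_f,b_f$, and (ii) that the two halves are not confused — the convexity bound is applied to the $A_i$ while the $\beta$-defining bound is applied to the $B_i$, and it is the common value $\sum_i\Phi_i(A_i)=\sum_i\Phi_i(B_i)$ that glues them together. Note that $\alpha\ge0$ is not actually needed to secure the operator inequalities, since the last display follows from functional calculus applied to the single function $t\mapsto a_f t+b_f-\alpha f(t)$; the restriction is natural only for the intended reading as a reverse Jensen bound.
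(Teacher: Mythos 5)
Your proof is correct, but it takes a genuinely different route from the paper's in its second half. The first half coincides: both arguments promote the chord bound $f(t)\le a_f t+b_f$ through the functional calculus on the $A_i$, apply the positive maps $\Phi_i$, and sum. From there the paper passes to vector states: it fixes a unit vector $x\in\mathscr{K}$, subtracts $\alpha\langle\sum_i\Phi_i(f(B_i))x,x\rangle$, and controls that term by Lemma \ref{14} (the Jensen-type inequality $f(\langle\sum_i\Phi_i(B_i)x,x\rangle)\le\langle\sum_i\Phi_i(f(B_i))x,x\rangle$) multiplied by $-\alpha$; the hypothesis $\sum_i\Phi_i(A_i)=\sum_i\Phi_i(B_i)$ and the definition of $\beta$ are then invoked at the scalar point $s=\langle\sum_i\Phi_i(A_i)x,x\rangle\in[m,M]$, and the operator inequality is recovered because $x$ was arbitrary. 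You instead stay entirely at the operator level: you swap $\sum_i\Phi_i(A_i)$ for $\sum_i\Phi_i(B_i)$ by hypothesis and then promote the $\beta$-defining scalar inequality $a_f t+b_f-\alpha f(t)\le\beta$ through the functional calculus on each $B_i$ before applying the $\Phi_i$. Your chaining is shorter and more elementary, dispensing with Lemma \ref{14} and the vector-state reduction altogether, and your closing observation is a real (if modest) gain: since you apply the functional calculus to the single function $t\mapsto a_f t+b_f-\alpha f(t)$, no sign condition on $\alpha$ is ever used, whereas the paper's multiplication of Lemma \ref{14} by $-\alpha$ is precisely where $\alpha\ge 0$ is forced. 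What the paper's route buys is chiefly structural: Lemma \ref{14} is the announced extension of Dragomir's result, and the authors exhibit it as the engine of the theorem, but your argument shows it is logically dispensable here.
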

\begin{proof}
Since $f$ is a convex function, for any $t\in \left[ m,M \right]$ we can write
\begin{equation}\label{6}
f\left( t \right)\le {{a}_{f}}t+{{b}_{f}}.
\end{equation}
It follows from the inequality \eqref{6} that
\[f\left( {{A}_{i}} \right)\le {{a}_{f}}{{A}_{i}}+{{b}_{f}}{{\mathbf{1}}_{\mathscr{H}}}.\]
Applying a positive linear mapping ${{\Phi }_{i}}$ and summing, we obtain
\[\sum\limits_{i=1}^{n}{{{\Phi }_{i}}\left( f\left( {{A}_{i}} \right) \right)}\le {{a}_{f}}\sum\limits_{i=1}^{n}{{{\Phi }_{i}}\left( {{A}_{i}} \right)}+{{b}_{f}}{{\mathbf{1}}_{\mathscr{K}}}.\]
So for any $x\in \mathscr{K}$ with $\left\| x \right\|=1$, we have
\[\left\langle \sum\limits_{i=1}^{n}{{{\Phi }_{i}}\left( f\left( {{A}_{i}} \right) \right)}x,x \right\rangle \le {{a}_{f}}\left\langle \sum\limits_{i=1}^{n}{{{\Phi }_{i}}\left( {{A}_{i}} \right)}x,x \right\rangle +{{b}_{f}}.\]
Whence for a given $\alpha \ge 0$,
\begin{align}
& \left\langle \sum\limits_{i=1}^{n}{{{\Phi }_{i}}\left( f\left( {{A}_{i}} \right) \right)}x,x \right\rangle -\alpha \left\langle \sum\limits_{i=1}^{n}{{{\Phi }_{i}}\left( f\left( {{B}_{i}} \right) \right)}x,x \right\rangle  \nonumber\\ 
& \le {{a}_{f}}\left\langle \sum\limits_{i=1}^{n}{{{\Phi }_{i}}\left( {{A}_{i}} \right)}x,x \right\rangle +{{b}_{f}}-\alpha f\left( \left\langle \sum\limits_{i=1}^{n}{{{\Phi }_{i}}\left( {{B}_{i}} \right)}x,x \right\rangle  \right) \label{12}\\ 
& = {{a}_{f}}\left\langle \sum\limits_{i=1}^{n}{{{\Phi }_{i}}\left( {{A}_{i}} \right)}x,x \right\rangle +{{b}_{f}}-\alpha f\left( \left\langle \sum\limits_{i=1}^{n}{{{\Phi }_{i}}\left( {{A}_{i}} \right)}x,x \right\rangle  \right) \label{13}\\ 
& \le \beta, \nonumber
\end{align}
where for the inequality \eqref{12} we have used Lemma \ref{14}, and the equality \eqref{13} follows from the fact that  $\sum\nolimits_{i=1}^{n}{{{\Phi }_{i}}\left( {{A}_{i}} \right)}=\sum\nolimits_{i=1}^{n}{{{\Phi }_{i}}\left( {{B}_{i}} \right)}$. Consequently,
\[\left\langle \sum\limits_{i=1}^{n}{{{\Phi }_{i}}\left( f\left( {{A}_{i}} \right) \right)}x,x \right\rangle \le \beta +\alpha \left\langle \sum\limits_{i=1}^{n}{{{\Phi }_{i}}\left( f\left( {{B}_{i}} \right) \right)}x,x \right\rangle \]
for any $x\in \mathscr{K}$ with $\left\| x \right\|=1$.
\end{proof}
\begin{remark}
If $f:\left[ m,M \right]\to \mathbb{R}$ is concave function, then the reverse inequality is valid in Theorem \ref{15} with  $\beta =\underset{t\in \left[ m,M \right]}{\mathop{\min }}\,\left\{ {{a}_{f}}t+{{b}_{f}}-\alpha f\left( t \right) \right\}$. 
\end{remark}

\begin{remark}\label{remark25}
In Theorem \ref{15}, we put $m=0$, $M=1$, $f(t)=t \log t$, $n=1$ and $\Phi_i = \frac{1}{\dim \mathscr{H}}\mathrm{Tr}$.  $\mathrm{Tr}:\mathbb{B}\left( \mathscr{H} \right)\to\mathbb{R}$ is a usual trace.
Since $\lim_{t\to 0+} t \log t =0$, we use the usual convention $f(0)=0$ in standard information theory \cite{Cover}. Then, we have $a_{t\log t} = b_{t\log t}=0$ and $\beta =\max_{0<t \leq 1}\left(-\alpha t\log t\right) =\frac{\alpha}{e}$ by easy computations. Therefore, for two positive operators $A,B$ satisfying $\mathrm{Tr}[A]=\mathrm{Tr}[B]=1$ (then the condition $\sum\nolimits_{i=1}^{n}{{{\Phi }_{i}}\left( {{A}_{i}} \right)}=\sum\nolimits_{i=1}^{n}{{{\Phi }_{i}}\left( {{B}_{i}} \right)}$ is trivially satisfied), we have the following interesting inequality:
\begin{equation}\label{remark23_ineq01}
\alpha H(B) \leq H(A) +\frac{\alpha}{e} \dim \mathscr{H},\quad (\alpha \geq 0)
\end{equation}
where $H(X) \equiv -\mathrm{Tr}[X\log X]$ is von Neumann entropy (quantum mechanical entropy) \cite{Nil} for a self-adjoint positive operator $X$ with unit trace.
The inequality recovers the non-negativity $H(A) \geq 0$, which is a fundamental property of von Neumann entropy, by taking $\alpha =0$.
Also we obtain the inequality:
\begin{equation}\label{remark25_ineq02}
|H(A)-H(B)|\leq\frac{\dim \mathscr{H}}{e},
\end{equation}
by taking $\alpha =1$ in \eqref{remark23_ineq01} and performing a replacement $A$ and $B$.
It may be interesting to compare 
the inequality given \eqref{remark25_ineq02} and the weaker version of Fannes's inequality \cite[(11.45)]{Nil}:
\begin{equation}\label{remark25_ineq03}
|H(A)-H(B)|\leq \mathrm{Tr}[|A-B|] \log \dim \mathscr{H} +\frac{1}{e}.
\end{equation}
If the dimension of Hilbert space $\mathscr{H}$ is large，then the upper bound of (\ref{remark25_ineq03}) is trivially tighter than that of (\ref{remark25_ineq02}). When $\dim \mathscr{H}=1$, both upper bounds coincide. For example, for the simple case $\mathrm{Tr}[|A-B|]=1$,
the upper bound of (\ref{remark25_ineq02}) is tighter than that of (\ref{remark25_ineq03}) when $\dim \mathscr{H}\leq 5$, while 
the upper bound of (\ref{remark25_ineq03}) is tighter than that of (\ref{remark25_ineq02}) 
when $\dim \mathscr{H}\geq 6$. As a conclusion, the inequality (\ref{remark25_ineq03}) gives tighter upper bound than ours in almost cases. That is,  our inequality  (\ref{remark25_ineq02}) gives a refinement for the weaker version of Fannes's inequality for only special cases.
\end{remark}

The method given in Remark \ref{remark25} is applicable to a generalized function in the following.
\begin{remark}
We use same setting in Remark \ref{remark25} except for the function $f_r(t)=\frac{t-t^{1-r}}{r}$ for $t>0$ and $0<r \leq 1$. Note that $\lim_{r\to 0} f_r(t) = t\log t$ so $f_r(t)$ is a parametric generalization of the function $t\log t$ used in Remark \ref{remark25}. We easily find $f_r''(t)=(1-r)t^{-r-1} \geq 0$ and $a_{f_r} = b_{f_r}=0$. Then $\beta =\max_{0<t \leq 1} g_{r,\alpha}(t)$, where $g_{r,\alpha}(t) \equiv \frac{\alpha}{r}(t^{1-r}-t)$. By easy computations, we have $g_{r,\alpha}'(t)=\frac{\alpha}{r}\left\{(1-r)t^{-r}-1\right\}$ and $g_{r,\alpha}''(t)=-\alpha(1-r)t^{-r-1} \leq 0$. Thus $g_{r,\alpha}$ takes maximum at $t=(1-r)^{\frac{1}{r}}$ and then $\beta = g_{r,\alpha}((1-r)^{\frac{1}{r}})=\alpha(1-r)^{\frac{1-r}{r}}$. By Theorem \ref{15}, we thus have
\begin{equation}\label{ineq00_remark26}
\alpha H_r(B) \leq H_r(A)+\alpha(1-r)^{\frac{1-r}{r}}\dim \mathscr{H},\quad (\alpha \geq 0,\,\, 0<r \leq 1)
\end{equation}
where $H_r(X)\equiv \frac{1}{r} \mathrm{Tr}[X^{1-r}-X] = -\mathrm{Tr}[X^{1-r}\ln_r X]$ defined for a self-adjoint positive operator $X$ with a unit trace, is often called quantum Tsallis entropy. 
See \cite{F2006,F2008}, for example. Note that the function $\ln_r t \equiv \frac{t^r -1}{r}$ defined for $t>0$ is often called $r$-logarithmic function and uniformly converges to standard logarithmic function $\log t$ in the limit $r \to 0$. Therefore, the inequality \eqref{ineq00_remark26} recovers the inequality \eqref{remark23_ineq01} in the limit $r\to 0$,
since $\lim_{r\to 0}H_r(A) =H(A)$ and $\lim_{r\to 0}(1-r)^{\frac{1-r}{r}} =\frac{1}{e}$. Thus we have the non-negativity $H_r(A) \geq 0$ by taking $\alpha =0$ and the inequality:
\begin{equation}\label{ineq01_remark26}
|H_r(A)-H_r(B)| \leq (1-r)^{\frac{1-r}{r}}\dim \mathscr{H}
\end{equation}
 by taking $\alpha =1$ and performing a replacement $A$ and $B$.
The inequality \eqref{ineq01_remark26} recovers the inequality \eqref{remark25_ineq02}, taking the limit of $r \to 0$.
\end{remark}

\begin{corollary}\label{16}
Let $f:\left[ m,M \right]\to \mathbb{R}$ be a convex function, ${{A}_{i}},{{B}_{i}}\in \mathbb{B}{{\left( \mathscr{H} \right)}_{h}}$ ($i=1,\ldots ,n$)  with the spectra in $\left[ m,M \right]$, and ${{p}_{1}},\ldots ,{{p}_{n}}$ be positive scalars with $\sum\nolimits_{i=1}^{n}{{{p}_{i}}}=1$. If $\,\,\sum\nolimits_{i=1}^{n}{{{p}_{i}}{{A}_{i}}}=\sum\nolimits_{i=1}^{n}{{{p}_{i}}{{B}_{i}}}$, then for a given $\alpha \ge 0$
\[\sum\limits_{i=1}^{n}{{{p}_{i}}f\left( {{A}_{i}} \right)}\le \beta \mathbf{1}_\mathscr{H}+\alpha \sum\limits_{i=1}^{n}{{{p}_{i}}f\left( {{B}_{i}} \right)},\]
where $\beta$ is defined as in \eqref{1}.
\end{corollary}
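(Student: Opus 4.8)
The plan is to obtain Corollary \ref{16} as an immediate specialization of Theorem \ref{15}. The key observation is that positive scalars paired with scalar multiplication furnish the simplest possible family of positive linear mappings, so the corollary should follow by a direct substitution rather than a new argument.

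First I would take $\mathscr{K}=\mathscr{H}$ and define, for each $i=1,\ldots,n$, the map $\Phi_i:\mathbb{B}\left(\mathscr{H}\right)\to\mathbb{B}\left(\mathscr{H}\right)$ by $\Phi_i\left(X\right)=p_i X$. Since each $p_i$ is a positive scalar, these maps are linear and positive, so they satisfy the structural requirements of Theorem \ref{15}. Next I would check the two hypotheses. The normalization holds because
\[
\sum_{i=1}^{n}\Phi_i\left(\mathbf{1}_{\mathscr{H}}\right)=\left(\sum_{i=1}^{n}p_i\right)\mathbf{1}_{\mathscr{H}}=\mathbf{1}_{\mathscr{H}},
\]
using $\sum_{i=1}^{n}p_i=1$; and the assumed equality $\sum_{i=1}^{n}p_iA_i=\sum_{i=1}^{n}p_iB_i$ is exactly the condition $\sum_{i=1}^{n}\Phi_i\left(A_i\right)=\sum_{i=1}^{n}\Phi_i\left(B_i\right)$ expressed in this notation.

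Finally, I would translate the conclusion. Because $\Phi_i\left(f\left(A_i\right)\right)=p_if\left(A_i\right)$ and $\Phi_i\left(f\left(B_i\right)\right)=p_if\left(B_i\right)$, the inequality delivered by Theorem \ref{15},
\[
\sum_{i=1}^{n}\Phi_i\left(f\left(A_i\right)\right)\le\beta\mathbf{1}_{\mathscr{K}}+\alpha\sum_{i=1}^{n}\Phi_i\left(f\left(B_i\right)\right),
\]
becomes precisely $\sum_{i=1}^{n}p_if\left(A_i\right)\le\beta\mathbf{1}_{\mathscr{H}}+\alpha\sum_{i=1}^{n}p_if\left(B_i\right)$, with the very same constant $\beta$ from \eqref{1}, since the definition of $\beta$ depends only on $f$, $\alpha$, and the interval $\left[m,M\right]$, none of which change under this substitution. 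As the entire derivation is a substitution into an already-proved theorem, there is no genuine obstacle here; the only point meriting explicit mention is the elementary fact that multiplication by a nonnegative scalar is a positive linear map, which guarantees that the specialized $\Phi_i$ are admissible in Theorem \ref{15}.
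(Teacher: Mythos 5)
Your proof is correct and matches the paper's own argument exactly: the paper also deduces Corollary \ref{16} from Theorem \ref{15} by taking $\mathscr{K}=\mathscr{H}$ and the positive linear mappings $\Phi_i: X \mapsto p_i X$, with $\sum_{i=1}^n p_i = 1$ giving the normalization $\sum_{i=1}^n \Phi_i(\mathbf{1}_{\mathscr{H}}) = \mathbf{1}_{\mathscr{H}}$. Your write-up simply spells out the verification steps that the paper leaves implicit.
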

\begin{proof}
We apply Theorem \ref{15} for positive linear mappings ${{\Phi }_{i}}:\mathbb{B}\left( \mathscr{H} \right)\to \mathbb{B}\left( \mathscr{H} \right)$ determined by ${{\Phi }_{i}}:X\mapsto {{p}_{i}}X$ ($i=1,\ldots ,n$).
\end{proof}
Apply Corollary \ref{16} to the function $f\left( t \right)={{t}^{r}}$ for $r\notin (0,1)$ and $f(t)=-t^r$ for $r \in (0,1)$, we have the following.

\begin{remark}
Let ${{A}_{i}},{{B}_{i}}\in \mathbb{B}{{\left( \mathscr{H} \right)}_{+}}$ ($i=1,\ldots ,n$) with the spectra in $\left[ m,M \right]$ and ${{p}_{1}},\ldots ,{{p}_{n}}$ be positive numbers with $\sum\nolimits_{i=1}^{n}{{{p}_{i}}}=1$. Put $h={M}/{m}\;$. If $\,\,\sum\nolimits_{i=1}^{n}{{{p}_{i}}{{A}_{i}}}=\sum\nolimits_{i=1}^{n}{{{p}_{i}}{{B}_{i}}}$, then for any $r\notin (0,1)$,
\begin{equation} \label{ex01}
	\sum\limits_{i=1}^{n}{{{p}_{i}}A_{i}^{r}}\le K\left( h,r \right)\sum\limits_{i=1}^{n}{{{p}_{i}}B_{i}^{r}},
\end{equation}
where the generalized Kantorovich constant \cite{pe} is defined by
\[K\left( h,r \right)=\frac{({{h}^{r}}-h)}{\left( r-1 \right)\left( h-1 \right)}{{\left( \frac{r-1}{r}\frac{{{h}^{r}}-1}{{{h}^{r}}-h} \right)}^{r}}.\]
In addition, for any $r\notin (0,1)$
\begin{equation} \label{ex02}
	\sum\limits_{i=1}^{n}{{{p}_{i}}A_{i}^{r}}\le C\left(h,r \right)+\sum\limits_{i=1}^{n}{{{p}_{i}}B_{i}^{r}}
\end{equation}
where
\[C\left(h,r \right)=m^r\left\{\frac{h-{{h}^{r}}}{h-1}+\left( r-1 \right){{\left( \frac{{{h}^{r}}-1}{r\left( h-1 \right)} \right)}^{\frac{r}{r-1}}}\right\}.\]
Similarly, we have for $r\in (0,1)$
\begin{equation} \label{ex02_added01}
	\sum\limits_{i=1}^{n}{{{p}_{i}}A_{i}^{r}}\ge K\left( h,r \right)\sum\limits_{i=1}^{n}{{{p}_{i}}B_{i}^{r}}, \quad 	\sum\limits_{i=1}^{n}{{{p}_{i}}A_{i}^{r}}\geq C\left(h,r \right)+\sum\limits_{i=1}^{n}{{{p}_{i}}B_{i}^{r}}.
\end{equation}

Therefore, if we set $n=1$ and $mZ \leq X,Y \leq MZ$ for self-adjoint positive operators $X,Y,Z$, then we have the following relations:
\begin{itemize}
\item For $r\notin (0,1)$, we have
$$
Z\natural_r X \leq K(h,r) Z\natural_rY,\quad Z\natural_r X \leq C(h,r)Z+ Z\natural_rY. 
$$
\item For $r\in [0,1]$, we have
$$
Z\sharp_r X \geq K(h,r) Z\sharp_rY,\quad Z\sharp_r X \geq C(h,r)Z+Z\sharp_rY.
$$
\end{itemize}
Where $X\natural_rY \equiv X^{1/2}(X^{-1/2}YX^{-1/2})X^{1/2}$ for $r\in \mathbb{R}$ and one use the standard symbol $\sharp_r$ for $r\in [0,1]$ and it is called a weighted geometric mean for two positive operators $X$ and $Y$, inserting
$A=Z^{-1/2}XZ^{-1/2}$ and $B=Z^{-1/2}YZ^{-1/2}$ in the inequalities \eqref{ex01}, \eqref{ex02} and \eqref{ex02_added01}.
Thus we have the following relations.
\begin{itemize}
\item[(i)] For $r \geq 1$, we have
$$S_r(Z||X) \leq \frac{K(h,r) Z\natural_rY-Z}{r},\quad S_r(Z||X) \leq \frac{C(h,r)}{r}Z  +S_r(Z||Y).$$
\item[(ii)] For $r < 0$,  we have 
$$S_r(Z||X) \geq \frac{K(h,r) Z\natural_rY-Z}{r},\quad S_r(Z||X) \geq \frac{C(h,r)}{r}Z  +S_r(Z||Y).$$
\item[(iii)] For $0<r<1$,  we have
$$
S_r(Z||X) \geq \frac{K(h,r) Z\sharp_rY-Z}{r},\quad S_r(Z||X) + \frac{C(h,r)}{r}Z  \geq S_r(Z||Y).
$$
\end{itemize}
Here $S_r(X||Y) \equiv \frac{X\natural_r Y -X}{r} =X^{1/2}\ln_r(X^{-1/2}YX^{-1/2})X^{1/2}$ defined for $r \in \mathbb{R}$ and  a self-adjoint positive operators $X$ and $Y$, is called Tsallis relative operator entropy.
See  \cite{FYK,FM2018}, for example.  

Taking the limit as $r \to 0$ in the inequalities (ii) and (iii) above, we have
$S_0(Z||X) \geq 0$ and $S_0(Z||X) +S_0(Z||Y) \geq Z$, since $X\natural_0Y =X=X\sharp_0Y$, $\lim_{r\to 0}K(h,r) =1$ and $\lim_{r\to 0}C(h,r)=0$. Where $S_0(X||Y) \equiv X^{1/2}\log(X^{-1/2}YX^{-1/2})X^{1/2}$ is called relative operator entropy and we have the relation $\lim_{r\to 0} S_r(X||Y) = S_0(X||Y)$ by $\lim_{r\to 0}\ln_r x = \log x$.  
\end{remark}

\begin{remark}
When ${{B}_{1}}={{B}_{2}}=\cdots ={{B}_{n}}=\sum\nolimits_{i=1}^{n}{{{p}_{i}}{{A}_{i}}}$, Corollary \ref{16} reduces to
\[\sum\limits_{i=1}^{n}{{{p}_{i}}f\left( {{A}_{i}} \right)}\le \beta \mathbf{1}_\mathscr{H}+\alpha f\left( \sum\limits_{i=1}^{n}{{{p}_{i}}{{A}_{i}}} \right)\]
which is well-known in \cite[Theorem 3.2]{pecaric}.
\end{remark}

A commutative version for Corollary \ref{16} is straightforwardly obtained in the following.
\begin{corollary}\label{5}
	Let $f:\left[ m,M \right]\to \mathbb{R}$ be a convex function, ${{x}_{i}},{{y}_{i}}\in \left[ m,M \right]$ ($i=1,\ldots ,n$), and ${{p}_{1}},\ldots ,{{p}_{n}}$ be positive numbers with $\sum\nolimits_{i=1}^{n}{{{p}_{i}}}=1$. If $\,\,\sum\nolimits_{i=1}^{n}{{{p}_{i}}{{x}_{i}}}=\sum\nolimits_{i=1}^{n}{{{p}_{i}}{{y}_{i}}}$, then for a given $\alpha \ge 0$
	\begin{equation}\label{ineq00_theorem21}
	\sum\limits_{i=1}^{n}{{{p}_{i}}f\left( {{y}_{i}} \right)}\le \beta +\alpha \sum\limits_{i=1}^{n}{{{p}_{i}}f\left( {{x}_{i}} \right)},	
	\end{equation}
where $\beta$ is defined as in \eqref{1}.
\end{corollary}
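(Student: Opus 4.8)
The plan is to obtain this commutative statement as a direct specialization of the operator inequality in Corollary~\ref{16}. The key observation is that real scalars are exactly the self-adjoint operators on the one-dimensional Hilbert space $\mathscr{H}=\mathbb{C}$: a number $t\in[m,M]$ is identified with $t\mathbf{1}_{\mathbb{C}}$, whose spectrum is $\{t\}\subseteq[m,M]$, and the functional calculus degenerates to ordinary evaluation, $f(t\mathbf{1}_{\mathbb{C}})=f(t)\mathbf{1}_{\mathbb{C}}$. Under this identification all the weighted sums appearing in Corollary~\ref{16} become ordinary weighted sums of reals. Concretely, I would set $\mathscr{H}=\mathbb{C}$ in Corollary~\ref{16} and choose the operators so that its conclusion lines up with \eqref{ineq00_theorem21}. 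Since the left-hand side of Corollary~\ref{16} carries $f(A_i)$ whereas \eqref{ineq00_theorem21} carries $f(y_i)$, the correct assignment is $A_i=y_i$ and $B_i=x_i$. The spectral hypothesis $A_i,B_i\in[m,M]$ is then precisely $x_i,y_i\in[m,M]$, and the balance condition $\sum_{i=1}^n p_i A_i=\sum_{i=1}^n p_i B_i$ becomes $\sum_{i=1}^n p_i y_i=\sum_{i=1}^n p_i x_i$, which, being symmetric in $x$ and $y$, is exactly the hypothesis $\sum_{i=1}^n p_i x_i=\sum_{i=1}^n p_i y_i$. Corollary~\ref{16} then yields $\sum_{i=1}^n p_i f(y_i)\le\beta\mathbf{1}_{\mathbb{C}}+\alpha\sum_{i=1}^n p_i f(x_i)$ with the same $\beta$ as in \eqref{1}, and reading this scalar operator inequality off as a numerical inequality gives \eqref{ineq00_theorem21}.

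Alternatively, one can bypass the operator framework entirely and reprove the estimate in a few elementary lines, retracing the scalar content of the proof of Theorem~\ref{15}. The supporting-line bound \eqref{6} gives $f(y_i)\le a_f y_i+b_f$; summing against $p_i$ (using $\sum_{i=1}^n p_i=1$) and invoking the balance condition turns $a_f\sum_{i=1}^n p_i y_i+b_f$ into $a_f\sum_{i=1}^n p_i x_i+b_f$. The scalar Jensen inequality $\alpha\sum_{i=1}^n p_i f(x_i)\ge\alpha f\!\left(\sum_{i=1}^n p_i x_i\right)$, which is the $\mathscr{H}=\mathscr{K}=\mathbb{C}$ case of Lemma~\ref{14} together with $\alpha\ge0$, then bounds $\sum_{i=1}^n p_i f(y_i)-\alpha\sum_{i=1}^n p_i f(x_i)$ above by $a_f\sum_{i=1}^n p_i x_i+b_f-\alpha f\!\left(\sum_{i=1}^n p_i x_i\right)$. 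Since $\sum_{i=1}^n p_i x_i$ is a convex combination of points of $[m,M]$ and hence lies in $[m,M]$, this last quantity is at most $\beta$ by the definition \eqref{1}, which completes the argument.

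I do not anticipate a genuine obstacle, which is why the statement is labelled as straightforwardly obtained. The only point demanding a moment's care is the bookkeeping of the $A\leftrightarrow y$, $B\leftrightarrow x$ assignment, since the roles of the two tuples are transposed relative to Corollary~\ref{16}; the symmetry of the equality constraint $\sum_{i=1}^n p_i x_i=\sum_{i=1}^n p_i y_i$ is precisely what renders this transposition harmless. I would present the specialization argument as the primary proof and, if space permits, indicate the self-contained three-line reproof as a remark.
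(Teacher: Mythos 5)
Your primary argument---specializing Corollary~\ref{16} to $\mathscr{H}=\mathbb{C}$ with $A_i=y_i$, $B_i=x_i$ and noting that the symmetric balance condition makes the transposition harmless---is precisely how the paper obtains this result, which it states without proof as the ``commutative version'' of Corollary~\ref{16}. Your proof is correct, and your supplementary elementary rederivation (chord bound plus scalar Jensen) is also valid, being just the scalar trace of the proof of Theorem~\ref{15}.
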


\begin{remark}\label{11}
	\hfill
	\begin{itemize}
		\item[(i)] 	It is worth emphasizing that we have not used the condition $\sum\nolimits_{i=1}^{k}{{{p}_{i}}{{x}_{i}}}\le \sum\nolimits_{i=1}^{k}{{{p}_{i}}{{y}_{i}}}$ ($k=1,\ldots ,n-1$) in Corollary \ref{5}.
		
		\item[(ii)] If one chooses ${{x}_{1}}={{x}_{2}}=\cdots ={{x}_{n}}=\sum\nolimits_{i=1}^{n}{{{p}_{i}}{{y}_{i}}}$  in Corollary \ref{5}, then we deduce
		\[\sum\limits_{i=1}^{n}{{{p}_{i}}f\left( {{y}_{i}} \right)}\le \beta +\alpha f\left( \sum\limits_{i=1}^{n}{{{p}_{i}}{{y}_{i}}} \right).\]
		Actually,  Corollary \ref{5} can be regarded as an extension of the reverse of scalar Jensen inequality.
	\end{itemize}	
\end{remark}

By choosing appropriate $\alpha $ and $\beta $, we obtain the following result.
\begin{corollary}\label{cor21}
	Let $f:\left[ m,M \right]\to \mathbb{R}$ be a convex function, ${{x}_{i}},{{y}_{i}}\in \left[ m,M \right]$ ($i=1,\ldots ,n$), and ${{p}_{1}},\ldots ,{{p}_{n}}$ be positive numbers with $\sum\nolimits_{i=1}^{n}{{{p}_{i}}}=1$. If $f\left( t \right)>0$ for all $t\in \left[ m,M \right]$ and $\sum\nolimits_{i=1}^{n}{{{p}_{i}}{{x}_{i}}}=\sum\nolimits_{i=1}^{n}{{{p}_{i}}{{y}_{i}}}$, then 
	\begin{equation}\label{cor21_ineq01}
	\sum\limits_{i=1}^{n}{{{p}_{i}}f\left( {{y}_{i}} \right)}\le K\left( m,M,f \right)\sum\limits_{i=1}^{n}{{{p}_{i}}f\left( {{x}_{i}} \right)},
	\end{equation}
	where $K\left( m,M,f \right)=\max \left\{ \frac{{{a}_{f}}t+{{b}_{f}}}{f\left( t \right)}:\text{ }t\in \left[ m,M \right] \right\}$. Additionally, 
	\begin{equation}\label{cor21_ineq02}
	\sum\limits_{i=1}^{n}{{{p}_{i}}f\left( {{y}_{i}} \right)}\le C\left( m,M,f \right)+\sum\limits_{i=1}^{n}{{{p}_{i}}f\left( {{x}_{i}} \right)},
	\end{equation}
	where $C\left( m,M,f \right)=\max \left\{ {{a}_{f}}t+{{b}_{f}}-f\left( t \right):\text{ }t\in \left[ m,M \right] \right\}$.
\end{corollary}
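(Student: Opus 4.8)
The plan is to obtain both inequalities as special cases of Corollary \ref{5} by choosing the free parameter $\alpha \ge 0$ judiciously and then reading off the corresponding value of $\beta$ from \eqref{1}. No new machinery is needed; the entire content is the selection of $\alpha$ and the evaluation of $\beta = \max_{t\in[m,M]}\{a_f t + b_f - \alpha f(t)\}$ under the extra hypothesis $f(t) > 0$.

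For the difference-type bound \eqref{cor21_ineq02}, I would simply set $\alpha = 1$ in Corollary \ref{5}. Then $\beta = \max_{t\in[m,M]}\{a_f t + b_f - f(t)\}$, which is exactly $C(m,M,f)$, and the asserted inequality is immediate. This step uses no positivity and is essentially a renaming.

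For the ratio-type bound \eqref{cor21_ineq01}, the idea is to choose $\alpha = K(m,M,f) = \max_{t\in[m,M]}\frac{a_f t + b_f}{f(t)}$ and to show that this forces $\beta = 0$. First I would note that $\alpha$ is well-defined and satisfies $\alpha \ge 1 > 0$: by convexity we have $f(t) \le a_f t + b_f$ (this is \eqref{6}), and since $f(t) > 0$ the ratio $\frac{a_f t + b_f}{f(t)}$ is at least $1$ and stays bounded on the compact interval $[m,M]$, so $\alpha \ge 0$ as required by Corollary \ref{5}. The key step is then the computation of $\beta$. Since $\alpha$ is the maximum of $\frac{a_f t + b_f}{f(t)}$ and $f(t) > 0$, for every $t \in [m,M]$ we have $a_f t + b_f - \alpha f(t) \le 0$, whence $\beta = \max_t\{a_f t + b_f - \alpha f(t)\} \le 0$; on the other hand, evaluating at a maximizer $t^\ast$ of the ratio gives $a_f t^\ast + b_f - \alpha f(t^\ast) = 0$, so $\beta \ge 0$. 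Therefore $\beta = 0$, and plugging $\alpha = K(m,M,f)$, $\beta = 0$ into Corollary \ref{5} delivers \eqref{cor21_ineq01}.

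The only point requiring any care — the place I would call the main (if mild) obstacle — is the verification that the positivity assumption $f(t) > 0$ is precisely what guarantees $K(m,M,f)$ is finite and that the associated $\beta$ collapses to $0$. Without $f > 0$ the division defining $K(m,M,f)$ could be singular or the sign of the ratio could flip, and the clean identification $\beta = 0$ would fail; so I would make sure to invoke $f(t) > 0$ explicitly at both places where the inequality $a_f t + b_f - \alpha f(t) \le 0$ is derived from $\frac{a_f t + b_f}{f(t)} \le \alpha$.
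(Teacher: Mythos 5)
Your proposal is correct and is exactly the paper's route: the paper derives this corollary from Corollary \ref{5} "by choosing appropriate $\alpha$ and $\beta$", namely $\alpha=1$ (giving $\beta=C(m,M,f)$) for the difference-type bound and $\alpha=K(m,M,f)$ (giving $\beta\le 0$ via $f(t)>0$ and $f(t)\le a_f t+b_f$) for the ratio-type bound. Your verification that positivity of $f$ makes $K(m,M,f)$ finite and collapses $\beta$ to $0$ is the right, and only, point of substance.
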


\begin{remark}\label{remark22}
	We relax the condition $\sum_{i=1}^n p_ix_i =\sum_{i=1}^n p_iy_i $ in Corollaries \ref{5} and \ref{cor21} to $\sum_{i=1}^n p_ix_i \leq \sum_{i=1}^n p_iy_i $. But we impose on the monotone decreasingness to the function $f$. We keep the other conditions as they are. Then we also have the inequalities \eqref{ineq00_theorem21},  \eqref{cor21_ineq01} and \eqref{cor21_ineq02}. 
	
	Let $0<\epsilon \ll 1$.	We here set $m=\epsilon$, $M=1$, $f(t)=-\log t$, $y_i=q_i$ and $x_i=p_i$ in Corollary \ref{cor21}.
	Then $K(\epsilon,1,-\log) = \max_{\epsilon \leq t < 1} \left(\frac{\log\epsilon}{\epsilon-1} \frac{t-1}{\log t}\right) =\frac{\log\epsilon}{\epsilon-1}>0$, since the function $\frac{t-1}{\log t}$ is monotone increasing on $t\in (0,1)$ and $\lim_{t\to 1}\frac{t-1}{\log t} =1$. We also find $C(\epsilon,1,-\log) =\max_{\epsilon\leq t\leq 1} g(t)$, where $g(t) \equiv \frac{\log \epsilon}{\epsilon -1}(1-t)+\log t$. By easy computations, we have $g'(t)=\frac{1}{t}-\frac{\log \epsilon}{\epsilon -1}$ and $g''(t)=-t^{-2}<0$, we thus find $g(t)$ takes a maximum value at $t=\frac{\epsilon -1}{\log \epsilon}$ and it is $C(\epsilon,1,-\log)=g(\frac{\epsilon -1}{\log \epsilon}) =-\log\frac{\log \epsilon}{\epsilon -1}+\frac{\log \epsilon}{\epsilon -1} -1 = \log S(\epsilon)$, where we used the property $S(h^{-1}) =S(h)$ for the Specht ratio $S(h) \equiv \frac{h^{\frac{1}{h-1}}}{e\log h^{\frac{1}{h-1}}}$ given in \cite{Spe} with $h=M/m$.
	
	Under the assumption $\sum_{i=1}^np_i^2 \leq \sum_{i=1}^np_iq_i$ with $\sum_{i=1}^nq_i=1$, we thus obtain the inequalities:
	\begin{equation*}\label{ineq01_remark22}
	-\left(\frac{\epsilon -1}{\log \epsilon}\right) \sum_{i=1}^np_i\log q_i \leq -\sum_{i=1}^np_i\log p_i, 
	\quad
	-\sum_{i=1}^np_i\log q_i -\log S(\epsilon) \leq -\sum_{i=1}^np_i\log p_i .
	\end{equation*}
	Both inequalities give the reverses of information inequality (Shannon inequality) \cite{Cover}:
	\begin{equation} \label{inf_ineq}
	H({\bf p})\equiv -\sum_{i=1}^n p_i \log p_i \leq -\sum_{i=1}^n p_i \log q_i.
	\end{equation}
	Here $H({\bf p})$ is often called Shannon entropy (information entropy) where ${\bf p}=\left\{p_1,p_2,\ldots,p_n\right\}$.
	It may be noted that $\left\{p_1,p_2,p_3\right\}=\left\{\frac{1}{3},\frac{1}{3},\frac{1}{3}\right\}$ and
	$\left\{q_1,q_2.q_3\right\}=\left\{\frac{1}{6},\frac{1}{3},\frac{1}{2}\right\}$ satisfy the condition $\sum_{i=1}^3p_i^2 = \sum_{i=1}^3p_iq_i$, and also $\left\{p_1,p_2,p_3\right\}=\left\{\frac{1}{4},\frac{1}{4},\frac{1}{2}\right\}$ and $\left\{q_1,q_2.q_3\right\}=\left\{\frac{1}{10},\frac{1}{10},\frac{4}{5}\right\}$ satisfy the condition $\sum_{i=1}^3p_i^2 < \sum_{i=1}^3p_iq_i$,
	for example.
	
	Similarly, we set $m=\epsilon$, $M=1$, $f(t)=-\log t$, $y_i=p_i$ and $x_i=q_i$ in Corollary \ref{cor21}. Then we obtain the following inequalities:
	\begin{equation}\label{ineq02_remark22}
	-\sum_{i=1}^np_i\log p_i \leq -\left(\frac{\log \epsilon}{\epsilon -1}\right) \sum_{i=1}^n p_i \log q_i,\quad -\sum_{i=1}^np_i\log p_i \leq \log S(\epsilon)- \sum_{i=1}^n p_i \log q_i,
	\end{equation}
	under the assumption $\sum_{i=1}^np_iq_i \leq \sum_{i=1}^np_i^2$.
	Since $\frac{\log \epsilon}{\epsilon -1} >1$ for $0 <\epsilon < 1$ and $\log S(\epsilon)$ is decreasing on $0< \epsilon <1$ and $\log S(\epsilon) >0$ with $\lim_{\epsilon \to 0}\log S(\epsilon) =\infty$ and $\lim_{\epsilon \to 1}\log S(\epsilon) =0$, both inequalities \eqref{ineq02_remark22} above do not refine the  information inequality given in \eqref{inf_ineq}.
\end{remark}

\section{Conclusion and discussion}
We studied Karamata type inequality in both commutative case and non-commutative case.
Firstly, in commutative case, we gave the generalized Karamata type inequality by the use of convexity without the condition $\sum_{i=1}^kp_ix_i \leq \sum_{i=1}^kp_iy_i$ for $k=1,2,\ldots,n-1$ as one of main results. As a corollary, we obtained two different type inequalities. By relaxing the condition $\sum_{i=1}^np_ix_i = \sum_{i=1}^np_iy_i$ to $\sum_{i=1}^np_ix_i \leq \sum_{i=1}^np_iy_i$ but imposing the monotone decreasingness on the function $f$, then we obtained two different type inequalities for information inequality (Shannon inequality).
We actually showed the existence the probability distributions satisfying the condition $\sum_{i=1}^n p_i^2 \leq \sum_{i=1}^np_iq_i$.
Secondly in non-commutative case, we gave Karamata type inequality for a self-adjoint operators and positive linear mappings by the use of convexity, as one of main results.  As applications of this result, we gave interesting inequalities for von Neumann entropy and quantum Tsallis entropy. These inequalities interpolate the non-negativity for their entropies and their refinements for the weaker version of Fannes's type inequality. 
Finally, we estimated the bounds for Tsallis relative operator entropy.

Concluding this article, we try to obtain parametric extended results for Remark \ref{remark22},   by the similar application to a generalized function.
It is known that we have $r$-parametric extended information inequality in the form \cite{F2004}:
$$
H_r({\bf p}) \equiv -\sum_{i=1}^n p_i^{1-r} \ln_r p_i\leq - \sum_{i=1}^n p_i^{1-r} \ln_r q_i
$$
which can be proven by
$$
0= -\ln_r\sum_{i=1}^n\left(p_i\frac{q_i}{p_i}\right)  \leq -\sum_{i=1}^n p_i\ln_r\frac{q_i}{p_i}
=\sum_{i=1}^n\frac{p_i-p_i^{1-r}q_i^r}{r} = -\sum_{i=1}^np_i^{1-r}\ln_rq_i +\sum_{i=1}^np_i^{1-r}\ln_rp_i.$$

By simple computations, we have Tsallis relative entropy can be deformed as 
$$ H_r({\bf p}) = -\sum_{i=1}^n p_i^{1-r} \ln_r p_i=\sum_{i=1}^np_i\ln_r\frac{1}{p_i}.$$
However the following inequality 
\begin{equation} \label{conclusion_ineq01}
\sum_{i=1}^n p_i \ln_r \frac{1}{p_i}\leq  \sum_{i=1}^np_i \ln_r\frac{1}{q_i}
\end{equation}
does not hold in general, since $\sum_{i=1}^np_i \ln_r\frac{1}{q_i} \neq  -\sum_{i=1}^n p_i^{1-r} \ln_r q_i$ for $r>0$, in general. 

Therefore it is natural to find constants $c_1 > 0$ and $c_2 > 0$ such that
$$
\sum_{i=1}^np_i\ln_r\frac{1}{p_i} \leq c_1 \sum_{i=1}^np_i \ln_r\frac{1}{q_i},\quad \sum_{i=1}^np_i\ln_r\frac{1}{p_i} \leq c_2 +\sum_{i=1}^np_i \ln_r\frac{1}{q_i}.
$$
Actually, we can analyze our aimed inequality such as
$$
\sum_{i=1}^n p_i \ln_r\frac{1}{p_i} =-\sum_{i=1}^n p_i^{1-r} \ln_r p_i \leq -\sum_{i=1}^n p_i^{1-r} \ln_r q_i \leq c_1 \sum_{i=1}^n p_i \ln_r \frac{1}{q_i}.
$$
The last inequality in the above holds if there exists $c_1 >0$ such that $q_i \leq c_1^{\frac{1}{r}} p_i$ for any $r>0$ and any $i=1,\ldots,n$. 
Similarly we can analyze our aimed inequality such as
$$
\sum_{i=1}^n p_i \ln_r\frac{1}{p_i} =-\sum_{i=1}^n p_i^{1-r} \ln_r p_i \leq -\sum_{i=1}^n p_i^{1-r} \ln_r q_i \leq c_2 + \sum_{i=1}^n p_i \ln_r \frac{1}{q_i}.
$$
The last inequality in the above holds if there exists $c_2$ such that $0< c_2 <\frac{1}{r}$ and $q_i \leq (1-rc_2)^{\frac{1}{r}} p_i$ for any $r>0$ and any $i=1,\ldots,n$. 
For these cases, we can not give the explicit forms for constants $c_1$ and $c_2$ and need a condition on two probability distributions. For the cases of $c_1$, it is necessary to satisfy $p_i > 0$, at least. For the case $c_2$, it is also necessary to satisfy $q_i>0$, at least. And the constants $c_1$ and $c_2$ have to vary to satisfy the corresponding conditions. They are no longer constants.
In order to give the explicit forms for constants $c_1$ and $c_2$ under a certain condition of the probability distributions ${\bf p}=\left\{p_1,\ldots,p_n\right\}$ and ${\bf q}=\left\{q_1,\ldots,q_n\right\}$, we apply the method in Remark \ref{remark22}.
We set $m=\epsilon$, $M=1$, $y_i = p_i$, $x_i = q_i$
 and the function $f_r(t) = \ln_r t^{-1} =\frac{t^{-r} -1}{r}$ for $0<t<1$ and $r>0$. Since the function $\ln_r t \equiv \frac{t^r -1}{r}$ uniformly converges to the usual logarithmic function $\log t$ in the limit $r \to 0$, $f_r(t)$ converges to $-\log t$ in the limit $r \to 0$. We easily find $f_r'(t)=-t^{-r-1} \leq 0$, $f_r''(t) = (r+1)t^{-r-2} \geq 0$, $a_{f_r} =\frac{1-\epsilon^{-r}}{r(1-\epsilon)}$ and $b_{f_r} =\frac{\epsilon^{-r}-1}{r(1-\epsilon)}$. Then
$K(\epsilon,1,f_r) =\max_{\epsilon \leq t <1}g_r(t)$, where
$g_r(t) \equiv \frac{(1-\epsilon^{-r})(t-1)}{(1-\epsilon)(t^{-r}-1)}$.
Then we calculate $g_r'(t) = \frac{(1-\epsilon^{-r})t^{r-1}h_r(t)}{(1-\epsilon)(t^{-r}-1)^2}$, where $h_r(t) \equiv r(t-1)-t(t^r-1)$ for $0<t<1$ and $r>0$. 
Since $h_r'(t) =r+1-(r+1)t^r$ and $h_r''(t)=-r(r+1)t^{r-1}<0$ for $r>0$,
we have $h_r'(t) \geq h_r'(1)=0$ which implies $h_r(t) \leq h_r(1)=0$.
According to the fact $\frac{1-\epsilon^{-r}}{1-\epsilon} \leq 0$ for $\epsilon \ll 1$ and $r >0$, we find $g_r'(t) \geq 0$. Thus we obtain
$K(\epsilon,1,f_r) =g_r(1) = \frac{\ln_r\epsilon^{-1}}{1-\epsilon} > 0$.
We also calculate $C(\epsilon,1,f_r) =\max_{\epsilon \le t <1} g_r(t)$, where $g_r(t)\equiv \frac{\ln_r \epsilon^{-1}}{1-\epsilon}(1-t)-\ln_rt^{-1}$.
We easily find $g_r'(t)=\frac{\ln_r \epsilon^{-1}}{\epsilon-1}+t^{-r-1}$ and $g_r''(t) =-(r+1)t^{-r-2} \leq 0$ so that $g_r(t)$ takes a maximum at $t=\left(\frac{1-\epsilon}{\ln_r\epsilon^{-1}}\right)^{\frac{1}{r+1}}$ and 
$$C(\epsilon,1,f_r) = g\left(\left(\frac{1-\epsilon}{\ln_r\epsilon^{-1}}\right)^{\frac{1}{r+1}}\right)=\frac{\ln_r\epsilon^{-1}}{1-\epsilon}-\left(\frac{\ln_r\epsilon^{-1}}{1-\epsilon}\right)^{\frac{r}{r+1}}-\ln_r\left(\frac{\ln_r\epsilon^{-1}}{1-\epsilon}\right)^{\frac{1}{r+1}} \equiv ls_r(\epsilon).$$
Thus we have the following inequalities with $c_1 = \frac{\ln_r\epsilon^{-1}}{1-\epsilon}$ and $c_2 = ls_r(\epsilon)$:
\begin{equation}\label{con_ineq02}
\sum_{i=1}^n p_i \ln_r \frac{1}{p_i}\leq
\left(\frac{\ln_r\epsilon^{-1}}{1-\epsilon}\right)\sum_{i=1}^np_i \ln_r\frac{1}{q_i}  ,\quad   \sum_{i=1}^n p_i \ln_r \frac{1}{p_i}\leq ls_r(\epsilon)+ \sum_{i=1}^np_i \ln_r\frac{1}{q_i},
\end{equation}
under the assumption $\sum_{i=1}^np_iq_i \leq \sum_{i=1}^np_i^2$. Thus we obtained two parametric extended inequalities (difference type and ratio type) under a certain assumption, with two constants $c_1$ and $c_2$, although the inequalities do not hold in general for the cases $c_1=1$ or $c_2=0$. This means our established mathematical tool, namely Corollary \ref{5} is applicable to obtain some results in natural science.
These inequalities \eqref{con_ineq02} recover the inequalities \eqref{ineq02_remark22}  in the limit $r \to 0$.
The reverse inequalities for a kind of the inequality \eqref{conclusion_ineq01} are similarly obtained in the following:
$$
\left(\frac{1-\epsilon}{\ln_r\epsilon^{-1}}\right)\sum_{i=1}^np_i \ln_r\frac{1}{q_i} \leq \sum_{i=1}^n p_i \ln_r \frac{1}{p_i},\quad  \sum_{i=1}^np_i \ln_r\frac{1}{q_i} -ls_r(\epsilon) \leq \sum_{i=1}^n p_i \ln_r \frac{1}{p_i},
$$
under the assumption $\sum_{i=1}^np_i^2 \leq \sum_{i=1}^np_iq_i$.

Finally, we study elementary properties on two constants $c_1$ and $c_2$. In the sequel, we often denote $c_1$ and $c_2$ for simplicity, although they are the functions of $\epsilon$ such as $c_1:=c_1(\epsilon)$ and $c_2:=c_2(c_1(\epsilon))$. Let $0<\epsilon <1$ and $r>0$. We firstly find that $\lim_{r\to 0}c_1 =\lim_{r\to 0}\frac{\ln_r\epsilon^{-1}}{1-\epsilon} =  \frac{\log \epsilon}{\epsilon-1} =K(\epsilon,1,-\log)$ and $\lim_{r\to 0}c_2 = \lim_{r\to 0}ls_r(\epsilon) = \log S(\epsilon) = C(\epsilon,1,-\log)$. The constant $ls_r(\epsilon)$ can be regarded as a parametric extension of $\log S(\epsilon)$.
We  find $\dfrac{dc_1(\epsilon)}{d\epsilon}
=\dfrac{\epsilon^{-r-1}\left\{-\epsilon^{r+1}+(r+1)\epsilon -r\right\}}{r(\epsilon^r-1)^2}\leq 0$ so that we have $c_1=c_1(\epsilon) > c_1(1)=\lim\limits_{\epsilon\to 1} =1$. The proof of $\frac{dc_1}{d\epsilon} \leq 0$  can be done by putting $k_r(\epsilon) \equiv -\epsilon^{r+1}+(r+1)\epsilon -r$ and calculating $k_r'(\epsilon) =(r+1)(1-\epsilon^r)\geq 0$ with $k_r(\epsilon) \leq k_r(1) =0$. Since $ls_r(\epsilon) = c_1 -c_1^{\frac{r}{r+1}}-\ln_rc_1^{\frac{1}{r+1}}$, we consider the function $l(c_1) = c_1 -c_1^{\frac{r}{r+1}}-\ln_rc_1^{\frac{1}{r+1}}$ for $c_1 >1$. Since $\frac{dl(c_1)}{dc_1} =1-c_1^{-\frac{1}{r+1}} \geq 0$ for $r>0$ and $ c_1 >1$.
Thus we have $l(c_1) \geq l(1) =0$ and we have $\lim\limits_{c_1\to 0}l(c_1) =\frac{1}{r}$ for $0<r < \infty$. Therefore we have $0 \leq ls_r(\epsilon) \leq \frac{1}{r}$.

\section*{Acknowledgement}
The authors would like to thank the referee for helping them to clarify the presentation of this paper. The author (S.F.) was partially supported by JSPS KAKENHI Grant Number 16K05257.

\end{document}